\setlist[enumerate]{leftmargin=*}
\setlist[itemize]{leftmargin=*}
\setlist[description]{font=\mdseries\textsf, leftmargin=1.5em}
\newcommand\CA {\mathrm{CA}}
\newcommand\trans {\mathrm{Trans}}
\newcommand\Forks{\mathrm{Forks}}
\newcommand\Noise{\mathrm{Noise}}
\newcommand\Size{\mathrm{Size}}
\newcommand\Span{\mathrm{Span}}
\newcommand\Excuse{\mathrm{Excuse}}
\newcommand\Expl{\mathrm{Expl}}
\newcommand\Time {\mathrm{Time}}
\newcommand\proj {\mathrm{proj}}
\renewcommand{\P}{\mathbf{P}}
\begin{document}

\title{A new version of Toom's proof}
\bookmark[page=1,level=0]{Title}
 \author{Peter G\'acs\thanks{Partially supported by NSF grant CCR-9204284}
 \\ Boston University}
\date{}
\maketitle
\thispagestyle{empty}

  \begin{abstract}
There are several proofs now for the stability of Toom's example of
a two-dimensional stable cellular automaton and its application to
fault-tolerant computation.
Simon and Berman simplified and strengthened Toom's original proof:
the present report is a simplified exposition of their proof.
 \end{abstract}

\section{Introduction}

Let us define cellular automata.

 \begin{definition}
For a finite \( m \), let \( \bbZ_{m} \) be the set of integers modulo \( m \);
we will also write \( \bbZ_{\infty}=\bbZ \) for the set of integers.
A set \( \bbC \) will be called a \df{one-dimensional set of sites}, or
\df{cells}, if it has
the form \( \bbC=\bbZ_{m} \) for a finite or infinite \( m \).
For finite \( m \), and \( x\in\bbC \), 
the values \( x+1 \) \( x-1 \) are always understood modulo \( m \).
Similarly, it will be called a two- or three-dimensional set of sites if it
has the form \( \bbC = \bbZ_{m_1}\times\bbZ_{m_2} \) or 
\( \bbC = \bbZ_{m_1}\times\bbZ_{m_2}\times\bbZ_{m_{3}} \) for finite or
infinite \( m_{i} \).
One- and three-dimensional sets of sites are defined similarly.

For a given set \( \bbC \) of sites and a finite set \( \bbS \) of states,
we call every function \( \xi:\bbC\to \bbS \) a \df{configuration}.
Configuration \( \xi \) assigns state \( \xi(x) \) to site \( x \).
For some interval \( I\subset\rint{0}{\infty} \), a function
\( \eta : \bbC\times I\to\bbS \) will be called a \df{space-time configuration}.
It assigns value \( \eta(x,t) \) to cell \( x \) at time \( t \).

In a space-time vector \( \tup{x,t} \), we will always write the space
coordinate first.
 \end{definition}

 \begin{definition}
Let us be given a function function \( \trans : \bbS^{3}\to \bbS \) and a
one-dimensional set of sites \( \bbC \).
We say that a space-time configuration \( \eta \) 
in one dimension is a \df{trajectory} of the \df{one-dimensional 
(deterministic) cellular automaton} \( \CA(\trans) \)
 \[
   \eta(x,t)=\trans(\eta(x-B,t-T), \eta(x,t-T), \eta(x+B,t-T))
 \]
 holds for all \( x,t \).
Deterministic cellular automata in several dimensions are defined similarly.  
 \end{definition}

Since we want to analyze the effect of noise, 
we will be interested in random space-time configurations.

 \begin{definition}
For a given set \( \bC \) of sites and time interval \( I \), consider a
probability distribution \( \P \) over all space-time configurations
\( \eta:\bbC\times I\to\bbS \).
Once such a distribution is given, we will talk about a 
\df{random space-time configuration} (having this distribution).
We will say that the distribution \( \P \) defines a
\df{trajectory} of the \( \eps \)-\df{perturbation}
 \[
  \CA_\eps(\trans)
 \]
if the following holds.
For all \( x\in\bbC,t\in I \), \( r_{-1},r_0,r_1\in\bbS \), 
let \( E_{0} \) be an event that \( \eta(x+j,t-1)=r_j \) (\( j=-1,0,1 \)) 
and \( \eta(x',t') \) is otherwise fixed in some arbitrary way 
for all \( t'<t \) and for all \( x'\not=x \), \( t'=t \).
Then we have
 \begin{equation*}
   \Pbof{\eta(x,t) = \trans(r_{-1},r_{0},r_{1}) \mid E_{0}} \le\eps.
 \end{equation*} 
 \end{definition}

A simple stable two-dimensional deterministic cellular automaton
given by Toom in \cite{Toom80} can be defined as follows.

 \begin{definition}[Toom rule]
  First we define the neigh\-bor\-hood
 \[
   H = \set{ \tup{0,0}, \tup{0,1}, \tup{1,0} }.
 \]
  The transition function is, for each cell \( x \), a majority vote over
the three values \( x+g_{i} \) where \( g_{i}\in H \).
 \end{definition}

As in ~\cite{GacsReif3dim88}, let us be given an arbitrary
one-dimensional transition function \( \trans \) and the integers \( N,T \).

\begin{definition}
We define the three-dimensional transition function \( \trans' \) as
follows.
The interaction neighborhood is \( H\times\set{-1, 0, 1} \) with the
neighborhood \( H \) defined above.
The rule \( \trans' \) says: in order to obtain your state at time
\( t+1 \), first apply majority voting among self and the northern and
eastern neighbors in each plane defined by fixing the third
coordinate.
Then, apply rule \( \trans \) on each line obtained by fixing the first and
second coordinates.
 
For a finite or infinite \( m \), let \( \bbC \) be our 3-dimensional space
that is the product of \( \bbZ_{m}^{2} \) and a 1-dimensional (finite or
infinite) space \( \bA \) with \( N=|\bA| \).
For a trajectory \( \zg \) of \( \trans \) on \( \bA \), we define the
trajectory \( \zeta' \) of \( \trans' \) on \( \bbC \) by
\( \zeta'(i,j,n,t)=\zg(n,t) \).
\end{definition}

Let \( \zeta' \) be a trajectory of \( \trans' \) and \( \eta \) a trajectory of
\( \CA_\eps(\trans') \) such that \( \eta(w,0)=\zeta'(w,0) \).

  \begin{theorem} \label{t.toomOrig}
Let \( r=24 \), and suppose \( \eps<\frac{1}{32\cdot r^8} \).  
  If \( m=\infty \) then we have
 \[
  \Pbof{ \eta(w,t)\not=\zeta'(w,t) } \le 4 r\eps .
 \]
If \( m \) is finite then we have
 \[
  \Pbof{ \eta(w,t)\not=\zeta'(w,t) }\le 4 r\eps +  (t N)\cdot 2r m^{2}(2r^{2}\eps^{1/12})^{m}.
 \]
 \end{theorem}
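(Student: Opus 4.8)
The plan is to prove stability of Toom's rule via the standard "contour/geometric" method, which Toom pioneered and Simon--Berman refined. The core idea is that a deviation of \( \eta \) from the intended trajectory \( \zeta' \) at a space-time point \( \tup{w,t} \) cannot happen "for free": the deterministic Toom rule is designed so that any single isolated error is healed in one step, and more generally any error present at \( \tup{w,t} \) can be traced backward to a structured collection of earlier noise events. First I would set up the notion of an \emph{explanation} (or \df{Expl}, matching the macros the paper reserves): a backward-branching space-time subgraph rooted at \( \tup{w,t} \) whose leaves are genuine noise events (places where \( \eta \) disagreed with what \( \trans' \) would have produced from its actual arguments). The eraser property of the majority-voting component guarantees that such an explanation must contain a certain minimum density of noise, because to flip the value at a cell you must corrupt a majority of its voting neighbors, and this propagates geometrically.

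The key combinatorial object is a bound on the number of possible explanations of a given size together with a bound on their probability. Concretely, I would (i) define the branching structure using the macros \( \Arrows \), \( \Forks \), \( \Span \), and \( \Size \) that the author has set up, so that each explanation of size \( k \) is encoded by its forks and arrows; (ii) count the number of distinct explanations of size \( k \) rooted at \( \tup{w,t} \), getting a bound of the form \( (\text{const}\cdot r^{c})^{k} \) from the bounded fan-out of the neighborhood; and (iii) bound the probability that a \emph{fixed} explanation is realized by the noise, using the defining inequality \( \Pbof{\eta(x,t)\neq\trans'(\dots)\mid E_0}\le\eps \) together with the conditional independence built into the definition of \( \CA_\eps \). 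Multiplying the count by the per-explanation probability and summing the geometric series over all sizes \( k\ge 1 \) yields a bound of the form \( C r \eps \) for the infinite case, which with \( r=24 \) and \( \eps<\frac{1}{32 r^{8}} \) collapses to \( 4r\eps \). The exponent \( 8 \) in the threshold and the factor \( 32 \) are exactly what make the geometric series converge with the stated constant.

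For the finite case \( m<\infty \), the additional term \( (tN)\cdot 2rm^{2}(2r^{2}\eps^{1/12})^{m} \) comes from the fact that on the torus \( \bbZ_{m}^{2} \) an error can no longer always be traced to a bounded-size local explanation: it may instead be supported by a \emph{large} explanation that wraps around the torus, whose span is forced to be at least of order \( m \). I would isolate these two cases of explanations — the "small/local" ones, handled exactly as in the infinite case, and the "wrapping" ones whose size is \( \Omega(m) \). For the wrapping explanations the size lower bound turns the geometric factor \( (2r^{2}\eps^{1/12})^{k} \) into the claimed \( (2r^{2}\eps^{1/12})^{m} \); the \( \eps^{1/12} \) arises because a contour spanning the torus requires work proportional to its linear extent, and the polynomial prefactors \( tN\cdot m^{2} \) count the choices of root cell (across the \( m^{2} \) spatial positions and the \( N \) values of \( \bA \)) and the \( t \) time layers at which a fresh wrapping explanation can originate.

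The main obstacle I expect is establishing the eraser/healing property quantitatively enough to drive the geometric counting: I must show that every explanation has size linear in the number of its forks (so that large explanations are exponentially improbable) and, crucially for the finite case, that any explanation reaching around the torus has span at least \( m \), forcing its size to grow with \( m \). Verifying that Toom's specific majority-over-\( H \) rule actually has the needed erosion property — that isolated islands of error shrink under the deterministic rule and hence any persistent error demands a spatially extended contour of noise — is the geometric heart of the argument and the step where the particular choice of the neighborhood \( H=\set{\tup{0,0},\tup{0,1},\tup{1,0}} \) is indispensable. Once that geometric lemma is in hand, the rest is bookkeeping of constants to match \( r=24 \) and the exponents \( 8 \) and \( 1/12 \).
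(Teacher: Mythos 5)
Your outline follows the same overall strategy as the paper -- explanation trees rooted at the deviating point, a count of trees of a given size against a per-tree probability bound, and a geometric series -- but it leaves unproven exactly the statement that carries the whole argument, and you acknowledge as much when you call the erosion property ``the main obstacle.'' The precise form needed is the Explanation Tree Lemma: an explanation containing \( n \) noise nodes has at most \( 4(n-1) \) edges, so that a tree of weight \( n \) has at most \( k=4(n-1) \) edges and the Tree Counting bound \( 2r(2r^{2})^{4(n-1)} \) multiplied by \( \eps^{n} \) gives the series \( 2r\eps\sum_{n\ge 0}(16r^{8}\eps)^{n} \). This is where the exponent \( 8 \) comes from, and it is not ``bookkeeping'': proving the \( 4(n-1) \) bound occupies the entire second half of the paper and requires the three linear functionals \( L_{1},L_{2},L_{3} \) with \( \sum_{i}L_{i}=0 \), the notion of a spanned set and its span, the Spanning Lemma (a discrete Stokes-type identity distributing the span of a set over a connected cover), the Excuse Size Lemma showing each backward step increases the span by exactly \( 1 \), and the refinement operation that converts span increments into a bound of at most \( 3 \) arrows per fork. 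Saying that ``isolated islands shrink'' is the right intuition but does not by itself yield a linear-in-\( n \) edge bound with an explicit constant, and without that constant the threshold \( \eps<\frac{1}{32r^{8}} \) and the conclusion \( 4r\eps \) cannot be recovered.

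Your treatment of the finite case also misidentifies the mechanism. The paper does not classify explanations into local ones and ones that ``wrap around the torus''; it lifts the process to the covering space \( \bbZ^{3} \), where the explanation tree always exists, and observes that the only thing that breaks for finite \( m \) is the \emph{independence} of the events \( w\in\Noise' \) over the noise nodes \( w \) of the tree, since distinct lifted nodes may project to the same cell. The extra term bounds the probability that the explanation has more than \( m \) nodes: the Tree Separator Theorem extracts a connected subtree with at most \( m \) nodes and weight \( n' \) with \( m/12\le n'\le m/4 \); connectedness with at most \( m \) nodes guarantees injectivity of the projection on that subtree, restoring independence and giving probability \( \le\eps^{m/12} \) per subtree, whence \( \eps^{1/12} \) appears in the base of the exponential. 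Your proposal would need both the separator argument (to get a subtree of controlled weight \emph{and} controlled node count) and the injectivity-of-projection observation to justify the \( \eps^{m/12} \) factor; a span lower bound of order \( m \) for ``wrapping contours'' is neither what is used nor obviously sufficient on its own.
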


The proof we give here is a further simplification of the simplified
proof of \cite{BermSim88}.

 \begin{definition}
Let \( \Noise \) be the set of space-time points \( v \) where \( \eta \) does
not obey the transition rule \( \trans' \).
Let us define a new process \( \xi \) such that \( \xi(w,t)=0 \) if
\( \eta(w,t)=\zeta'(w,t) \), and 1 otherwise.
Let
 \[
 \mathrm{Corr}(a,b,u,t)=
  \Maj(\xi(a,b,u,t),\xi(a+1,b,u,t),\xi(a,b+1,u,t)) .
 \]
 \end{definition}

For all points \( (a,b,u,t+1)\not\in\Noise(\eta) \), we have
 \[
  \xi(a,b,u,t+1) \le
  \max(\mathrm{Corr}(a,b,u-1,t),\mathrm{Corr}(a,b,u,t),
   \mathrm{Corr}(a,b,u+1,t)) .
 \]
Now, Theorem \ref{t.toomOrig} can be restated as follows:

Suppose \( \eps<\frac{1}{32\cdot r^{8}} \).  
If \( m=\infty \) then
 \[
  \Pbof{ \xi(w,t)=1 } \le 4 r\eps .
 \]
 If \( m \) is finite then
 \[
  \Pbof{ \xi(w,t)=1 } \le 4 r\eps + (t N)\cdot 2r m^{2}(2r^{2}\eps^{1/12})^{m}.
 \]

\section{Proof using small explanation trees}

 \begin{definition}[Covering process]
If \( m<\infty \) let \( \bbC'=\bbZ^{3} \) be our \df{covering space}, and
\( \bV'=\bbC'\times \bbZ \) our covering space-time.
There is a projection \( \proj(u) \) from \( \bbC' \) to \( \bbC \) defined by
 \[
  \proj(u)_{i}=u_{i}\bmod m \qquad (i=1,2) .
 \]
This rule can be extended to \( \bbC' \) identically.
We define a random process \( \xi' \) over \( \bbC' \) by
 \[
  \xi'(w,t)=\xi(\proj(w),t) .
 \]
The set \( \Noise \) is extended similarly to \( \Noise' \).
Now, if \( \proj(w_{1})=\proj(w_{2}) \) then \( \xi'(w_{1},t)=\xi'(w_{2},t) \) and
therefore the failures at time \( t \) in \( w_{1} \) and \( w_{2} \) are not
independent.
 \end{definition}

 \begin{definition}[Arrows, forks]
In figures, we generally draw space-time with the time direction
going down.
Therefore, for two neighbor points \( u,u' \) 
of the space \( \bbZ \) (where \( u \) is considered a neighbor for itself as well)  and
integers \( a,b,t \), we will call \df{arrows}, or \df{vertical edges}
the following kinds of (undirected) edges:
 \begin{multline*}
     \set{\tup{a,b,u,t},\tup{a,b,u',t-1 } }, \set{\tup{a,b,u,t}, \tup{a+1,b,u',t-1} } ,
 \\  \set{\tup{a,b,u,t}, \tup{a,b+1,u',t-1} } .
 \end{multline*}
 We will call \df{forks}, or \df{horizontal edges} the following
kinds of edges:
 \begin{multline*}
     \set{\tup{a,b,u,t}, \tup{a+1,b,u,t} }, \set{\tup{a,b,u,t},
       \tup{a,b+1,u,t} } , 
  \\ \set{\tup{a+1,b,u,t}, \tup{a,b+1,u,t} } .
 \end{multline*}
We define the graph \( \bG \) by introducing all possible arrows and
forks.
Thus, a point is adjacent to 6 possible forks and 18 possible
arrows: the degree of \( \bG \) is at most
 \[
  r=24.
 \]
(If the space is \( d+2 \)-dimensional, then \( r=12(d+1) \).)
We use the notation \( \Time(\tup{w,t})=t \).
 \end{definition}

The following lemma is key to the proof, since
it will allow us to estimate the probability of each deviation from the
correct space-time configuration.
It assigns to each deviation a certain tree called its ``explanation''.
Larger explanations contain more noise and have a
correspondingly smaller probability.
For some constants \( c_{1},c_{2} \), there will be \( \le 2^{c_{1}L} \) 
explanations of size \( L \) and each such explanation will have probability
upper bound \( \eps^{c_{2}L} \).

 \begin{lemma}[Explanation Tree]\label{l.explTree}
Let \( u \) be a point outside the set \( Noise' \) with \( \xi'(u)=1 \).
Then there is a tree \( \Expl(u,\xi') \) consisting of \( u \) and points
\( v \) of \( \bG \) with \( \Time(v)<\Time(u) \) and connected with arrows
and forks called an \df{explanation} of \( u \).
It has the property that if \( n \) nodes of \( \Expl \)
belong to \( \Noise' \) then the number of edges of \( \Expl \) is at most
\( 4(n-1) \).
 \end{lemma}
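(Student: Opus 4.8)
The plan is to build $\Expl(u,\xi')$ by a top-down recursion on time, starting from the root $u$ and repeatedly applying the branching inequality. First I note that the displayed recursion for $\xi$ lifts verbatim to the covering process: since $\xi'(w,t)=\xi(\proj(w),t)$ and $\Noise'$ is the pullback of $\Noise$, for every point $(a,b,c,t+1)\notin\Noise'$ we still have $\xi'(a,b,c,t+1)\le\max_{c'\in\{c-1,c,c+1\}}\mathrm{Corr}(a,b,c',t)$, where each $\mathrm{Corr}$ is a majority of the three values of $\xi'$ on the Toom neighborhood $H$ at time $t$.

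Next I describe one branching step from a node $v=(a,b,u_{0},t+1)$ with $\xi'(v)=1$ and $v\notin\Noise'$. The lifted inequality produces a level $u'\in\{u_{0}-1,u_{0},u_{0}+1\}$ with $\mathrm{Corr}(a,b,u',t)=1$, so at least two of the three points $(a,b,u',t),(a+1,b,u',t),(a,b+1,u',t)$ carry $\xi'=1$; pick two such points $p,q$. They are two corners of a fork triangle, so $\{p,q\}$ is a fork, and each of them is an arrow-neighbor of $v$ (the three arrow targets of $v$ at level $u'$ are exactly those three points). I therefore attach $p$ to $v$ by an arrow and $q$ to $p$ by a fork, adding two legal edges of $\bG$ together with two children that again carry $\xi'=1$.

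I then recurse on each child unless it lies in $\Noise'$, in which case it becomes a leaf. The recursion terminates because arrows strictly decrease time and $\xi'(\cdot,0)=0$ everywhere, as $\eta$ and $\zeta'$ agree at time $0$; moreover it can terminate \emph{only} at a noise node, since every non-noise $1$-point admits a further branching. Hence $\Expl$ is a full binary tree: every internal node is a non-noise $1$-point with exactly two children, every leaf lies in $\Noise'$, and every non-root node sits strictly below $\Time(u)$. In a full binary tree the number of internal nodes is one less than the number of leaves, so with $n$ leaves (all in $\Noise'$) there are $n-1$ internal nodes; each contributes exactly one arrow and one fork, giving $2(n-1)\le 4(n-1)$ edges.

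The main obstacle is making the construction yield a genuine tree rather than a graph in which two branches reconverge, while keeping the bookkeeping ``every leaf is a noise node'' exact. I would resolve this by reading $\Expl$ as the unfolding of the recursion, so that distinct recursive calls produce distinct nodes even when they land on the same space-time point of $\bG$; then no branch is truncated prematurely and the leaf/noise correspondence is preserved, the harmless multiplicities being absorbed later through the covering space $\bbC'$. The comfortable factor-$4$ margin over the sharp value $2$ also leaves room for a cruder step that branches into all three majority points rather than exactly two.
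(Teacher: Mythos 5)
There is a genuine gap, and it sits exactly where the real content of the lemma lies. Your construction produces the \emph{unfolding} of the branching recursion, so the quantity in your count ``\( n \) leaves, \( n-1 \) internal nodes, \( 2(n-1) \) edges'' is the number of leaves counted \emph{with multiplicity}, whereas the lemma needs the number of edges bounded by \( 4(n-1) \) where \( n \) is the number of \emph{distinct} nodes of \( \Expl \) lying in \( \Noise' \) --- and the main theorem needs exactly that, since the probability estimate \( \eps^{n} \) requires \( n \) independent (hence distinct) noise events, and the Tree Counting Lemma is applied to genuine subtrees of \( \bG \). When branches reconverge, a single noise point can appear as many leaves of the unfolded tree: the recursion has depth up to \( \Time(u) \), so the unfolded tree can have on the order of \( 2^{\Time(u)} \) leaves supported on a bounded set of noise points, and your bound then says nothing about the number of edges as a function of the number of distinct noise points. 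Declaring the multiplicities ``harmless'' and ``absorbed by the covering space'' is not a fix; the covering space handles the wrap-around of the torus, not the reconvergence of your recursion. Pruning a branch when it revisits a point does not work either, since it creates leaves that are not noise nodes and destroys the leaf count.

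A useful sanity check: your argument never uses any geometric property of the neighborhood \( H \) --- only that a non-noise deviation forces two deviations one step earlier. That weaker hypothesis holds just as well for a symmetric majority rule, which is not an eroder and is not stable, so no argument with only that input can prove the lemma. The paper's proof is built precisely to control reconvergence: the linear functionals \( L_{i} \) with \( \sum_{i}L_{i}=0 \), the span, the Spanning Lemma, and the cluster/cause-graph/refinement machinery guarantee that each refinement increases the span (equivalently, the eventual number of forks) by exactly \( 1 \) while adding at most \( 3 \) arrows, and that after contracting arrows the number of forks equals \( n-1 \) for \( n \) distinct noise nodes. That accounting, not a leaf count in an unfolded binary tree, is what yields the bound \( 4(n-1) \).
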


This lemma will be proved in the next section.  
To use it in the proof of the main theorem, we need some easy lemmas.

 \begin{definition}
A \df{weighted tree} is a tree whose nodes have weights 0 or 1, with the
root having weight 0.
The \df{redundancy} of such a tree is the ratio of its number of
edges to its weight.
The set of nodes of weight 1 of a tree \( T \) will be denoted by
\( F(T) \).  

A subtree of a tree is a subgraph that is a tree.
 \end{definition}

 \begin{lemma} \label{l.oneCut}
Let \( T \) be a weighted tree of total weight \( w>3 \) and redundancy \( \lg \).  
It has a subtree of total weight \( w_{1} \) with
 \( w/3 <w_{1} \le 2w/3 \), and redundancy \( \le \lg \).
 \end{lemma}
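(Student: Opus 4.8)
The plan is to exhibit the required subtree as the \emph{cheapest} one that is heavy enough. Among all subtrees (connected subgraphs) of $T$ whose total weight exceeds $w/3$, let $S$ be one with the fewest edges; such an $S$ exists because $T$ itself qualifies. I claim $S$ already works, so the content is in verifying the weight window and the redundancy bound.

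For the weight, note that $S$ can have no leaf of weight $0$: deleting such a leaf removes an edge while keeping the weight above $w/3$, contradicting edge-minimality. Hence every leaf of $S$ has weight $1$. Since $w>3$ gives $w/3>1$, the weight $w_1$ of $S$ exceeds $1$, so $S$ has at least two weight-$1$ nodes and thus a removable weight-$1$ leaf $\ell$. Deleting $\ell$ lowers the weight by exactly $1$ and removes one edge, so by minimality the result must already fail the test, i.e. $w_1-1\le w/3$. Combined with $w_1>w/3$ this yields $w/3 < w_1 \le w/3+1 \le 2w/3$, the last step using $w\ge 3$. Thus the weight lands in the required window automatically, and this is the only place the slack of the window is consumed.

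For the redundancy I compare $S$ with its complement. Deleting $S$ from $T$ leaves components $Q_1,\dots,Q_p$, each joined to $S$ by a single edge; writing $e_i,u_i$ for the edge count and weight of $Q_i$, the whole tree has $E(S)+\sum_i e_i+p$ edges and weight $w=w_1+\sum_i u_i$. A mediant computation then shows that the redundancy $E(S)/w_1$ is at most $\lg=(E(S)+\sum_i e_i+p)/w$ \emph{iff} the aggregate complement density $(\sum_i e_i+p)/\sum_i u_i$ is at least $\lg$; equivalently, it suffices to rule out $E(S)/w_1>\lg$.

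Suppose for contradiction $E(S)/w_1>\lg$. Then the complement is on average cheaper per unit weight than $S$, so some component $Q_j$ delivers weight at a cost below that of the most expensive pendant path of $S$ (the maximal chain of degree-two, weight-$0$ nodes running out to some leaf). The idea is to excise that pendant path, shedding one unit of weight and several edges, and to re-import one unit of weight through the cheaper $Q_j$, producing a connected subtree of weight still above $w/3$ but with strictly fewer edges, contradicting the edge-minimality of $S$. Making this swap precise is the main obstacle: one must choose the leaf and the component so that the attach vertex survives the excision, connectivity is preserved, and the weight is restored by exactly one unit so as to stay inside the admissible range. I expect this exchange step, rather than the weight bookkeeping, to carry the real difficulty of the proof.
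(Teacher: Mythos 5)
Your leaf-deletion argument for the weight window is correct: the edge-minimal subtree \( S \) among those of weight \( >w/3 \) has no weight-\( 0 \) leaf, hence deleting a weight-\( 1 \) leaf shows \( w_{1}\le w/3+1\le 2w/3 \). But the other half of your claim --- that this particular \( S \) has redundancy \( \le\lg \) --- is false, so the exchange step you flag as the main obstacle cannot be completed. Counterexample: let \( T \) be a spider with a weight-\( 0 \) center \( c \) and three legs, each a path \( c,x_{1},x_{2},x_{3},x_{4} \) with \( x_{1},x_{2} \) of weight \( 0 \) and \( x_{3},x_{4} \) of weight \( 1 \). Then \( w=6>3 \), the tree has \( 12 \) edges, and \( \lg=2 \). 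Any subtree of weight \( \ge 3 \) must connect three weight-\( 1 \) nodes lying on at least two legs; the cheapest choice takes \( x_{3},x_{4} \) from one leg and \( x_{3} \) from another, costing \( 7 \) edges, and no subtree of weight \( \ge 3 \) does better. So your \( S \) has weight \( 3 \) and redundancy \( 7/3>2=\lg \). (The lemma itself holds here: two full legs plus the center give weight \( 4\in(2,4] \) and redundancy \( 8/4=2 \).) The failure mode is exactly the one you anticipated: the complement of \( S \) is denser on average, but the one unit of weight it could supply cheaply (the node \( x_{4} \) of the second leg) attaches to \( S \) only through the expensive pendant path you would have to excise, so no weight-preserving swap reduces the edge count --- \( S \) genuinely is edge-minimal and genuinely has redundancy above \( \lg \).

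The structural reason your mediant computation does not close is that it compares \( S \) against its complement, which may be disconnected and need not have weight in the admissible window, so a low complement density cannot be converted back into a better connected subtree. The paper instead produces, by a root-down minimality-and-pruning argument, a split of \( T \) at a single node into two \emph{connected} pieces \( T_{1},T_{2} \), both of weight at least \( w/3 \) and with \( w_{1}+w_{2}=w \) and edge counts summing to that of \( T \); then \( \lg \) is a weighted average of the two redundancies and whichever piece is sparser wins. If you want to rescue a variational argument, you must minimize over a class of subtrees whose complements are themselves admissible competitors --- which is essentially what the paper's rooted construction arranges.
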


\begin{proof}
Let us order \( T \) from the root \( r \) down.
Let \( T_{1} \) be a minimal subtree below \( r \) with weight \( > w/3 \).
Then the subtrees immediately below \( T_{1} \) all weigh \( \le w/3 \).   
Let us delete as many of these as possible while keeping \( T_{1} \)
weigh \( >w/3 \).
At this point, the weight \( w_{1} \) of \( T_{1} \) is \( > w/3 \) but \( \le 2w/3 \)
since we could subtract a number \( \le w/3 \) from it so that \( w_{1} \) would
become \( \le w/3 \) (note that since \( w>3 \)) the tree \( T_1 \) is not a
single node.

Now \( T \) has been separated by a node into \( T_{1} \) and \( T_{2} \), with
weights \( w_{1},w_{2}>w/3 \).
Since the root of a tree has weight 0, by definition the possible
weight of the root of \( T_{1} \) stays in \( T_{2} \) and we have \( w_{1}+w_{2}=w \).
The redundancy of \( T \) is then a weighted average of the redundancies
of \( T_{1} \) and \( T_{2} \), and we can choose the one of the two with the
smaller redundancy: its redundancy is smaller than that of \( T \).%
\end{proof}

\begin{theorem}[Tree Separator]
Let \( T \) be a weighted tree with weight \( w \) and redundancy \( \lg \), 
and let \( k<w \).
Then \( T \) has a subtree with weight \( w' \) such that \( k/3< w'\le k \) 
and redundancy \( \le \lg \).
 \end{theorem}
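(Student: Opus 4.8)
The plan is to obtain the desired subtree by iterating Lemma~\ref{l.oneCut}. Starting from \( S_{0}=T \) (of weight \( w_{0}=w \) and redundancy \( \le\lg \)), I would build a decreasing chain of subtrees \( S_{0}\supseteq S_{1}\supseteq\cdots \) as follows. Given \( S_{i} \) with weight \( w_{i} \) and redundancy \( \le\lg \): if \( w_{i}\le k \), stop; otherwise \( w_{i}>k \), and I apply Lemma~\ref{l.oneCut} to \( S_{i} \) to extract a subtree \( S_{i+1} \) with \( w_{i}/3<w_{i+1}\le 2w_{i}/3 \) and redundancy at most that of \( S_{i} \), hence \( \le\lg \).

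Two invariants drive the argument. First, redundancy is monotone along the chain: each application of Lemma~\ref{l.oneCut} returns a subtree whose redundancy is no larger than its parent's, so every \( S_{i} \) inherits redundancy \( \le\lg \). Second, the weights strictly decrease, since \( w_{i+1}\le 2w_{i}/3<w_{i} \); as they are positive integers, the process must terminate, say at \( S_{n} \) with \( w_{n}\le k \). It then remains only to verify the lower bound \( w_{n}>k/3 \), and this is exactly where the three-fold bounds of Lemma~\ref{l.oneCut} are essential: the chain cannot jump over the target interval. Indeed, since \( n \) is the \emph{first} index with \( w_{n}\le k \), we have \( w_{n-1}>k \), and the lemma guarantees \( w_{n}>w_{n-1}/3>k/3 \). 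Thus \( k/3<w_{n}\le k \) with redundancy \( \le\lg \), as required.

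The step that needs care — and the main obstacle — is that Lemma~\ref{l.oneCut} applies only to trees of weight \( >3 \), whereas I invoke it at every \( S_{i} \) with \( w_{i}>k \). When \( k\ge 3 \) this is automatic, since then \( w_{i}>k\ge 3 \) at each such step and the argument above goes through unchanged. The remaining regime \( k<3 \) is a finite base case: there the iteration can be pushed only until the weight first drops to some value \( \le 3 \), which may still exceed \( k \), so Lemma~\ref{l.oneCut} can no longer be applied and one must instead exhibit a small subtree of weight in \( (k/3,k] \) directly. I expect this endgame — controlling the redundancy of an explicitly chosen small subtree, rather than one handed to us with its redundancy bound built in — to be the only genuinely delicate point; everything else is the clean geometric iteration described above.
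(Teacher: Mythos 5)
Your proposal is correct and is essentially the paper's own proof: iterate Lemma~\ref{l.oneCut} until the weight first drops to \( \le k \), and the factor-of-three bound in that lemma forces \( w'>k/3 \) at the last step. The edge case you flag (\( k<3 \), where the weight-\( >3 \) hypothesis of Lemma~\ref{l.oneCut} could fail while the weight still exceeds \( k \)) is not addressed in the paper either, and is immaterial to the application, where \( k \) is of order \( m/4 \).
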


\begin{proof}
Let us perform the operation of Lemma \ref{l.oneCut} repeatedly,
until we get weight \( \le k \).
Then the weight \( w' \) of the resulting tree is \( >k/3 \).%
\end{proof}

\begin{lemma}[Tree Counting]
In a graph of maximum node degree \( r \) the number of weighted
subtrees rooted at a given node and having \( k \) edges is at most
\( 2r\cdot(2r^{2})^k \).
 \end{lemma}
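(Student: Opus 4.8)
The plan is to separate the choice of weights from the choice of the underlying subgraph, and then to count the underlying subtrees by an injective encoding into walks. First I would note that a subtree with \( k \) edges has exactly \( k+1 \) nodes. By definition the root carries weight \( 0 \), while each of the remaining \( k \) nodes may independently carry weight \( 0 \) or \( 1 \); hence a single unweighted rooted subtree with \( k \) edges supports exactly \( 2^{k} \) weightings. Thus the quantity to be bounded equals \( 2^{k} \) times the number \( M_{k} \) of unweighted subtrees with \( k \) edges rooted at the given node, and it suffices to show \( M_{k}\le r^{2k} \), since then the total is \( 2^{k}r^{2k}=(2r^{2})^{k}\le 2r\,(2r^{2})^{k} \).

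To bound \( M_{k} \) I would encode each rooted subtree by its depth-first Euler tour. Fix once and for all a linear order on the edges incident to each node of the graph. Given a subtree \( T \) rooted at \( \rho \), run the usual depth-first search: from the current node, traverse the lowest-ordered incident edge of \( T \) that leads to an as-yet-unvisited node; when no such edge exists, retreat along the edge toward the parent. Each of the \( k \) edges of \( T \) is then crossed exactly twice, once on the way down and once on the way back up, so the procedure produces a walk \( W(T) \) of length \( 2k \) in the graph that starts (and ends) at \( \rho \).

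The map \( T\mapsto W(T) \) is injective: the set of edges crossed by \( W(T) \) is precisely the edge set of \( T \), so two distinct subtrees cannot yield the same walk. Consequently \( M_{k} \) is at most the number of length-\( 2k \) walks issuing from \( \rho \), and since each step moves to one of at most \( r \) neighbors of the current node, there are at most \( r^{2k} \) such walks. This gives \( M_{k}\le r^{2k} \) and hence, after restoring the \( 2^{k} \) weight factor, the claimed bound \( 2r\,(2r^{2})^{k} \).

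The only point requiring care is the well-definedness and injectivity of the Euler-tour encoding; beyond that the estimate is comfortably loose, since the factor \( 2r \) in the statement (and indeed part of the exponent) is slack, so no sharper bookkeeping is needed. The main obstacle is therefore purely a matter of setting up the canonical traversal correctly and checking that distinct subtrees produce distinct walks, rather than any genuine combinatorial difficulty.
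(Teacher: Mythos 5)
Your proof is correct, and it reaches the stated bound by a genuinely different (and slightly sharper) encoding than the paper's. The paper also counts via an injective traversal-based encoding, but a breadth-first one: at each node it records the tree-degree \( i \) (a choice among \( r \)) and then the identity of each outgoing edge (a choice among at most \( r \) each); since the tree-degrees sum to \( 2k \), this gives at most \( r^{2k+1} \) unweighted trees, and the weights contribute a further factor \( 2^{k+1} \), for exactly \( 2r\cdot(2r^{2})^{k} \). Your depth-first Euler tour instead maps each subtree injectively to a walk of length \( 2k \) issuing from the root, of which there are at most \( r^{2k} \); together with the \( 2^{k} \) weightings of the non-root nodes this yields \( (2r^{2})^{k} \), showing that the factor \( 2r \) in the statement is pure slack. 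Both arguments are elementary and equally serviceable for the paper's purposes, where only the base of the exponential matters (through the condition \( 16r^{8}\eps<1/2 \)); your version has the minor advantage of dispensing with per-node degree bookkeeping, at the cost of having to verify that the canonical tour is well defined and injective, which you do correctly (the edge set of the walk recovers the tree, and the root is fixed).
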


 \begin{proof}
Let us number the nodes of the graph arbitrarily.
Each tree of \( k \) edges can now be traversed in a breadth-first
manner.
At each non-root node of the tree of degree \( i \) from which we
continue, we make a choice out of \( r \) for \( i \) and then a choice out
of \( r-1 \) for each of the \( i-1 \) outgoing edges.
This is \( r^{i} \) possibilities at most.
At the root, the number of outgoing edges is equal to \( i \), so this
is \( r^{i+1} \).
The total number of possibilities is then at most \( r^{2k+1} \) since
the sum of the degrees is \( 2k \).
Each point of the tree can have weight 0 or 1, which multiplies the
expression by \( 2^{k+1} \).%
\end{proof}

\begin{proof}[Proof of Theorem \protect\ref{t.toomOrig}]
Let us consider each explanation tree a weighted tree in which the
weight is 1 in a node exactly if the node is in \( \Noise' \).
For each \( n \), let \( \cE_{n} \) be the set of possible explanation trees
\( \Expl \) for \( u \) with weight \( |F(\Expl)|=n \).
First we prove the theorem for \( m=\infty \), that is \( \Noise'=\Noise \).  
If we fix an explanation tree \( \Expl \) then all the events \( w\in\Noise' \)
for all \( w\in F=F(\Expl) \) are independent from each other.
It follows that the probability of the event \( F\sbs\Noise' \) is at most
\( \eps^{n} \).
Therefore we have
 \[
  \Pbof{\xi(u)=1} \le \sum_{n=1}^{\infty} |\cE_{n}|\eps^{n} .
 \]
By the Explanation Tree Lemma, each tree in \( \cE_{n} \) has at most
\( k=4(n-1) \) edges.
By the Tree Counting Lemma, we have
 \[
 |\cE_{n}|\le 2r\cdot(2r^{2})^{4(n-1)},
 \]
Hence  
 \[
  \Pbof{\xi(u)=1} \le 
2r \eps\sum_{n=0}^{\infty}(16r^{8}\eps)^{n}= 2r\eps(1-16r^{8}\eps)^{-1} .
 \]
If \( \eps \) is small enough to make \( 16r^{8}\eps<1/2 \) then this is \( <4 r\eps \).

In the case \( \bbC\not=\bbC' \) this estimate bounds only the probability
of \( \xi'(u)=1,\ |\Expl(u,\xi')|\le m \), since otherwise the events
\( w\in\Noise' \) are not necessarily independent for \( w\in F \).
Let us estimate the probability that an explanation \( \Expl(u,\xi') \)
has \( m \) or more nodes.
It follows from the Tree Separator Theorem that \( \Expl \) has a
subtree \( T \) with weight \( n' \) where \( m/12\le n'\le m/4 \), and at most
\( m \) nodes.
Since \( T \) is connected, no two of its nodes can have the same
projection.
 Therefore for a fixed tree of this kind, for each node of weight 1
the events that they belong to \( \Noise' \) are independent.
Hence for each tree \( T \) of these sizes, the probability that \( T \) is
such a subtree of \( \Expl \) is at most \( \eps^{m/12} \).
To get the probability that there is such a subtree we multiply by
the number of such subtrees.
An upper bound on the number of places for the root is \( tm^{2}N \).
An upper bound on the number of trees from a given root is obtained
from the Tree Counting Lemma.
Hence
 \[
  \Pbof{ |\Expl(u,\xi')|>m }
   \le 2r t m^{2}N\cdot (2r^{2}\eps^{1/12})^{m} .
 \]
\end{proof}

\section{The existence of small explanation trees}

\subsection{Some geometrical facts}

Let us introduce some geometrical concepts.

 \begin{definition}
  Three linear functionals are defined as follows for \( v=\tup{x,y,z,t} \).
 \[
  L_{1}(v)=-x-t/3 , \qquad L_{2}(v)=-y-t/3 , \qquad L_{3}(v)=x+y+2t/3 .
 \]
 \end{definition}
  Notice \( L_{1}(v)+L_{2}(v)+L_{3}(v)=0 \). 

 \begin{definition}
For a set \( S \), we write 
 \[
   \Size(S)=\sum_{i=1}^{3}\max_{v\in S} L_{i}(v) .
 \]
 \end{definition}
Notice that for a point \( v \) we have \( \Size(\set{v})=0 \).

 \begin{definition}
 A set \( \cS = \set{S_{1},\ldots, S_{n}} \) of sets is \df{connected by
intersection} if the graph \( G(\cS) \) is connected which we obtain by
introducing an edge between \( S_{i} \) and \( S_{j} \) whenever
 \( S_{i}\cap S_{j}\not=\emptyset \).
 \end{definition}

 \begin{definition}
 A \df{spanned set} is an object \( \bbP=\tup{P, v_{1}, v_{2}, v_{3}} \) where \( P \) is
a space-time set and \( v_{i}\in P \).
The points \( v_{i} \) are the \df{poles} of \( \bbP \), and \( P \) is its \df{base set}.
  We define \( \Span(\bbP) \) as \( \sum_{i=1}^3 L_{i}(v_{i}) \).
\end{definition}

\begin{remark}
  As said in the introduction, this paper is an exposition of Toom's more general proof in~\cite{Toom80},
  specialized to the case of the construction in~\cite{GacsReif3dim88}.
  Some of the terminology is taken from~\cite{BermSim88}, and differs from the one in~\cite{Toom80}.
  What is called a spanned set here is called a ``polar'' in~\cite{Toom80}, and its span is called its ``extent'' there.
  In our definition of \( L_{i} \) the terms depending on \( t \) play no role, they just make
  the exposition compatible with~\cite{Toom80}.
\end{remark}

\begin{lemma}[Spanned Set Creation]
 \label{l.createSpan}
 If \( P \) is a set then there is a spanned set \( \tup{P,v_{1},v_{2},v_{3}} \) on \( P \)
with \( \Span(\bbP)=\Size(P) \).
 \end{lemma}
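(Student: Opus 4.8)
The plan is to take the three poles to be the maximizers of the three functionals $L_i$. First I would note that for each $i\in\set{1,2,3}$ the number $\max_{v\in P} L_i(v)$ appearing in the definition of $\Size(P)$ is attained at some point of $P$, and I would choose $v_i\in P$ with $L_i(v_i)=\max_{v\in P} L_i(v)$. Since each $v_i$ lies in $P$, the tuple $\tup{P,v_1,v_2,v_3}$ is by definition a spanned set on $P$. Summing the defining identities over $i$ then gives $\Span(\tup{P,v_1,v_2,v_3})=\sum_{i=1}^3 L_i(v_i)=\sum_{i=1}^3 \max_{v\in P} L_i(v)=\Size(P)$, which is exactly the claim.

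The only step requiring any care is the existence of the maximizers. Here I would invoke that the sets $P$ arising in the proof are nonempty and finite: they are collections of nodes of a finite explanation tree, so each $L_i$, being a real-valued function on a finite nonempty set, attains its maximum on $P$. It is worth remarking that the three chosen poles $v_1,v_2,v_3$ need not be distinct, but this is irrelevant, since the definition of a spanned set places no such requirement on the poles. (If one wanted to allow an infinite base set, one would additionally need each $L_i$ to be bounded above on $P$ and to attain its supremum there, but only the finite case is used.)

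I do not expect a genuine obstacle: the lemma is essentially a bookkeeping restatement of the definitions of $\Size$ and $\Span$. The entire content is the observation that the three separate maxima defining $\Size(P)$ can be simultaneously realized by three actual points of $P$, which then serve as the poles; once the definitions are unwound there is nothing further to prove.
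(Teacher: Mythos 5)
Your proof is correct and matches the paper's one-line argument exactly: assign each pole $v_i$ to a point of $P$ where $L_i$ is maximal, and the identity $\Span(\bbP)=\Size(P)$ follows by summing. The extra remarks about finiteness of $P$ and possible coincidence of poles are reasonable but not needed beyond what the paper already takes for granted.
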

 \begin{proof}
Assign \( v_{i} \) to a point of the set \( P \) in which \( L_{i} \) is maximal.%
\end{proof}

The following lemma is our main tool.

 \begin{lemma}[Spanning]
 \label{l.stokes}
 Let \( \bbL = \tup{L, u_{1}, u_{2}, u_{3}} \) be a spanned set and \( \cM \) be a set of
subsets of \( L \) connected by intersection, whose union covers the poles of \( \bbL \).
Then there is a set \( \set{ \bbM_{1},\dots,\bbM_{n}} \) of spanned sets whose
base sets \( M_{i} \) are elements of \( \cM \), such that the following holds.
Let \( M'_{i} \) be the set of poles of \( \bbM_{i} \).
  \begin{cjenum}
   \item \( \Span(\bbL) = \sum_{i}\Span(\bbM_{i}) \).
   \item The union of the sets \( M'_{j} \) covers the set of poles of \( \bbL \).
   \item The system \( \set{M'_{1},\ldots,M'_{n} } \) is a minimal system
connected by intersection (that is none of them can be deleted) 
that connects the poles of \( \bbL \).
  \end{cjenum}
 \end{lemma}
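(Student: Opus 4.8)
The plan is to convert the hypothesis ``connected by intersection'' into a genuine tree and then conserve the span by a discrete Stokes-type telescoping resting on the identity \( L_{1}(v)+L_{2}(v)+L_{3}(v)=0 \). First I would, inside the intersection graph \( G(\cM) \), fix for each pole \( u_{k} \) a set \( t_{k}\in\cM \) containing it, and extract a minimal connected subgraph \( T \) of \( G(\cM) \) joining the three terminals \( t_{1},t_{2},t_{3} \). Such a minimal Steiner subgraph is a tree, each of whose leaves is one of the \( t_{k} \) (an interior leaf could be pruned) and each of whose edges separates at least one terminal from another (otherwise a whole branch carrying no terminal could be deleted, contradicting minimality). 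The nodes of \( T \) are the base sets \( M_{i}\in\cM \) of the spanned sets I will build. On every edge \( e=\set{M_{a},M_{b}} \) of \( T \) I would fix one shared point \( p_{e}\in M_{a}\cap M_{b} \).

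The spanned sets come from a routing rule. For a node \( M_{i} \) of \( T \) and a direction \( k \), set the \( k \)th pole \( v_{i,k}=u_{k} \) if \( M_{i}=t_{k} \), and otherwise let \( v_{i,k}=p_{e} \) where \( e \) is the first edge on the unique \( T \)-path from \( M_{i} \) to \( t_{k} \). Each \( v_{i,k} \) then lies in \( M_{i} \), so \( \bbM_{i}=\tup{M_{i},v_{i,1},v_{i,2},v_{i,3}} \) is a legitimate spanned set with base in \( \cM \), and \( u_{k} \) is a pole of \( \bbM_{t_{k}} \), which gives the covering conclusion immediately. For the span identity I would account for \( \sum_{i}\Span(\bbM_{i})=\sum_{i}\sum_{k}L_{k}(v_{i,k}) \) point by point. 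Each terminal pole \( u_{k} \) is used exactly once, in direction \( k \) at the node \( t_{k} \), contributing \( L_{k}(u_{k}) \). Each edge point \( p_{e} \) is used only at the two endpoints of \( e \): writing \( A\mid B \) for the two components of \( T-e \), direction \( k \) uses \( p_{e} \) from whichever side does \emph{not} contain \( t_{k} \), so across the three directions \( p_{e} \) is counted once per direction and contributes
\[
  \sum_{k=1}^{3}L_{k}(p_{e})=0 .
\]
Summing, \( \sum_{i}\Span(\bbM_{i})=\sum_{k}L_{k}(u_{k})=\Span(\bbL) \), which is the span conservation.

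For the minimality conclusion I would argue from \( T \). For each edge \( e \) the separation property places a terminal on both sides, so some direction uses \( p_{e} \) from \( M_{a} \) and some other direction uses it from \( M_{b} \); hence \( p_{e}\in M'_{a}\cap M'_{b} \), and the pole-set system \( \set{M'_{i}} \) inherits every edge of \( T \). It is therefore connected by intersection and, through the terminals, connects the poles. Minimality should then descend from that of \( T \): deleting the pole-set of a leaf \( t_{k} \) destroys the unique cover of \( u_{k} \), while deleting an interior pole-set severs the only available intersection-path across that node.

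I expect the main obstacle to be exactly this minimality clause together with keeping the tree clean. The delicate point is that two pole-sets \( M'_{i},M'_{j} \) might meet at a point other than a chosen \( p_{e} \), producing an intersection edge outside \( T \) that could allow an interior node to be dropped; ruling this out, while simultaneously treating the degenerate cases where two poles coincide, where a single set already contains several poles, or where the only available shared point \( p_{e} \) equals some pole \( u_{k} \), is where the genuine care lies. The span conservation itself is essentially automatic once the routing is in place, so the weight of the proof sits in choosing \( T \) and the shared points so that the emitted system is honestly minimal.
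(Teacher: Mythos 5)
Your construction establishes conclusions (a) and (b) correctly, and by a genuinely clean route: routing each direction \( k \) through the first edge toward the terminal \( t_{k} \) sets up, for each \( k \), a bijection between the non-terminal nodes of \( T \) and the edges of \( T \), so each chosen connection point \( p_{e} \) is charged exactly once per direction and contributes \( L_{1}(p_{e})+L_{2}(p_{e})+L_{3}(p_{e})=0 \), leaving exactly \( \Span(\bbL) \). This is the same cancellation the paper exploits, but your bookkeeping lives on the set-level tree, whereas the paper's partition argument (the sets \( E_{i}(v) \)) lives on a graph built from individual points.

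The gap is conclusion (c), and it is genuine rather than a deferrable technicality. Because you choose the Steiner tree \( T \) in the intersection graph \( G(\cM) \) and only afterwards pick one witness point per edge, nothing prevents the pole sets of two nodes far apart in \( T \) from sharing a point. Concretely, take \( \cM=\set{M_{1},M_{2},M_{3},M_{4}} \) with \( u_{1}\in M_{1} \), \( u_{2},u_{3}\in M_{4} \), a common point \( w\in M_{2}\cap M_{3}\cap M_{4} \), and \( M_{1}\cap M_{3}=M_{1}\cap M_{4}=\emptyset \): the path \( M_{1}-M_{2}-M_{3}-M_{4} \) is an inclusion-minimal connected subgraph joining the terminals, yet with \( p_{23}=p_{34}=w \) you get \( M'_{3}=\set{w}\subseteq M'_{2}\cap M'_{4} \), so \( M'_{3} \) can be deleted and your system is not minimal. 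You flag exactly this danger at the end but do not resolve it, and it cannot be waved away: the minimality clause is precisely what the refinement step later uses to conclude that the new spanned clusters and forks form a \emph{tree}, which drives the \( 4(n-1) \) edge bound of the Explanation Tree Lemma. The paper closes this hole by working one level down: after taking a minimal tree in \( G(\cM) \), it deletes individual points from the sets (keeping each \( u_{j} \)) until every remaining point is necessary for connecting the poles, which forces the surviving sets to have at most three points with any two of them meeting in at most one point, and only then assigns poles, so that every point of \( M'_{i} \) is a pole and minimality is inherited from the point-level pruning. Some such point-level cleanup, or a post-hoc deletion argument that re-verifies the span identity, is needed to complete your proof.
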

 \begin{proof}
Let \( M_{i_{j}}\in\cM \) be a set containing the point \( u_{j} \).
Let us choose \( u_{j} \) as the \( j \)-th pole of \( M_{i_{j}} \).
Now leave only those sets of \( \cM \) that are needed for a 
minimal tree \( \cT \) of the graph \( G(\cM) \)
connecting \( M_{i_{1}},M_{i_{2}},M_{i_{3}} \).
Keep deleting points from each set (except \( u_{j} \) from \( M_{i_{j}} \)) until
every remaining point is necessary for a connection among \( u_{j} \).
There will only be two- and three-element sets, and any two of them
intersect in at most one element.
Let us draw an edge between each pair of points if they belong to a
common set \( M'_{i} \).
This turns the union
 \begin{align*}
  V = \bigcup_{i} M'_{i}
 \end{align*}
into a graph.
(Actually, this graph can have only two simple forms: a point
connected via disjoint paths to the poles 
\( u_{i} \) or a triangle connected via disjoint paths to these poles.)
For each \( i \) and \( j \), there is a shortest path between \( M'_{i} \) and \( u_{j} \).
The point of \( M'_{i} \) where this path leaves \( M'_{i} \) will be made the
\( j \)-th pole \( u_{ij} \) of \( M_{i} \).
For \( j\in\set{1,2,3} \) we have \( u_{i_{j}j}=u_{j} \) by definition.
This rule creates three poles in each \( M_{i} \) and each point of \( M'_{i} \) is a
pole.

Let us show \( \sum_{i}\Span(\bbM_{i})=\Span(\bbL)  \).
We can write
 \begin{align}\label{eq:sum}
   \sum_{i}\Span(\bbM_{i}) =\sum_{v\in V}\sum_{i,j : v = u_{ij} }L_{j}(v). 
 \end{align}
For a point \( v\in V \), let 
 \begin{align*}
  I(v) =\setof{i : v\in M'_{i}}.
 \end{align*}
For \( i\in I(v) \) let \( E_{i}(v) \) be the set of those \( j\in\set{1,2,3} \) for which
either \( i=i_{j} \) or \( v\ne u_{ij} \).
Because graph \( \cT \) is a tree, for each fixed \( v \) the sets \( E_{i}(v) \) are disjoint.
Because of connectedness, they form a partition of the set \( \set{1,2,3} \).
Let \( e_{i}(j,v)=1 \) if \( j\in E_{i}(v) \) and 0 otherwise, then we have
\( \sum_{i} e_{i}(j,v)=1 \) for each \( j \).

We can now rewrite the sum~\eqref{eq:sum} as
 \begin{align*}
 \sum_{j=1}^{3}\sum_{v\in V}L_{j}(v)(e_{i_{j}}(j,v) +
       \sum_{v\in V}\sum_{i\in I(v)\xcpt\{i_{j}\}} (1-e_{i}(j,v))).
 \end{align*}
If \( i=i_{j}\in I(v) \) then by definition we have \( 1-e_{i}(j,v)=0 \), 
therefore we can simplify the sum as
 \begin{align*}
 \sum_{j=1}^{3}\sum_{v\in V}L_{j}(v)e_{i_{j}}(j,v) + 
\sum_{i\in I(v)}\sum_{j=1}^{3}L_{j}(v)(1-e_{i}(j,v)).
 \end{align*}
The first term is equal to \( \Span(\bbL) \); we show that the last term is 0.
Moreover, we show \( 0=\sum_{j=1}^{3}L_{j}(v)\sum_{i\in I(v)}(1-e_{i}(j,v)) \)
for each \( v \).
Indeed, \( \sum_{i\in I(v)}(1-e_{i}(j,v)) \) is independent of \( j \) since
it is \( |I(v)| - \sum_{i}e_{i}(j,v)=|I(v)|-1 \).
On the other hand, \( \sum_{j}^{3}L_{j}(v)=0 \) as always.
\end{proof}

\subsection{Building an explanation tree}

Let us define the excuse of a space-time point.
 
 \begin{definition}[Excuse]
Let \( v=\tup{a,b,u,t+1} \) with \( \xi'(v)=1 \).
If \( v\not\in\Noise' \) then there is a \( u' \) such that \( \xi'(w)=1 \) for
at least two members \( w \) of the set
 \[
  \Setof{ \tup{a,b,u',t}, \tup{a+1,b,u',t}, \tup{a,b+1,u',t} } .
 \]
We define the set \( \Excuse(v) \) as such a pair of elements \( w \),
and as the empty set in all other cases.
By Lemma \ref{l.createSpan}, we can turn \( \Excuse(v) \) into a spanned
set, \( \tup{\Excuse(v),w_{1},w_{2},w_{3}} \) with span 1.
Denote 
 \begin{align*}
 \Excuse_{i}(v)=w_{i}.
 \end{align*}
Since no excuse is built from a node in \( \Noise' \), let us delete all arrows leading down
from nodes in \( \Noise' \): the new graph is denoted by \( \bG' \).
 \end{definition}

The following lemma utilizes the fact that Toom's rule ``makes triangles
shrink''.

\begin{lemma}[Excuse size]\label{l.spanIncr}
If \( \bbV = \tup{V, v_{1},v_{2},v_{3}} \) is a spanned set and \( v_{i} \) are not in
\( \Noise' \) then we have
 \[
 \sum_{j=1}^{3}L_{j}(\Excuse_{j}(v_{j}))=\Span(\bbV)+1 .  
 \]
 \end{lemma}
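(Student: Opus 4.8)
The plan is to establish the identity one pole at a time and then add the three contributions. By the Spanned Set Creation Lemma (Lemma~\ref{l.createSpan}), the $j$-th pole of the excuse of a point $v$ is chosen at a point of $\Excuse(v)$ where $L_j$ is maximal, so $L_j(\Excuse_j(v))=\max_{w\in\Excuse(v)}L_j(w)$. Since the terms $\Excuse_j(v_j)$ occur in the conclusion, each excuse must be a genuine two-element set, which by the definition of $\Excuse$ means $v_j\notin\Noise'$ and $\xi'(v_j)=1$. It therefore suffices to show, for a single such pole $v=\tup{a,b,u,t+1}$, that
\[
 \max_{w\in\Excuse(v)}L_j(w)=L_j(v)+\frac{1}{3}\qquad(j=1,2,3);
\]
summing this over $j$ and over the three poles $v_1,v_2,v_3$ then gives $\sum_{j}L_j(\Excuse_j(v_j))=\sum_{j}\bigl(L_j(v_j)+\frac{1}{3}\bigr)=\Span(\bbV)+1$.

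To prove the per-pole identity I would compute the functionals directly. Because $v\notin\Noise'$ and $\xi'(v)=1$, the set $\Excuse(v)$ is a two-element subset of the three candidates
\[
 p_0=\tup{a,b,u',t},\qquad p_1=\tup{a+1,b,u',t},\qquad p_2=\tup{a,b+1,u',t}
\]
for some $u'\in\set{u-1,u,u+1}$. A short computation, using that $v$ lies one time-step above the candidates, shows that at its maximizers each $L_j$ exceeds its value at $v$ by exactly $\frac13$. The structural feature of Toom's rule that ``makes triangles shrink'' then appears as the pattern of these maximizers: $L_1$ is maximal at $p_0$ and $p_2$, $L_2$ at $p_0$ and $p_1$, and $L_3$ at $p_1$ and $p_2$, so each of the three maxima over $\set{p_0,p_1,p_2}$ is attained at two of the three candidates.

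The only real obstacle is that $\Excuse(v)$ is a \emph{pair}, not the full triple, so discarding a candidate could a priori lower one of the maxima. It cannot: since each of $L_1,L_2,L_3$ is maximal at two of the three candidates, removing any single one of them still leaves a maximizer for every $L_j$. Hence $\max_{w\in\Excuse(v)}L_j(w)$ equals the maximum of $L_j$ over all three candidates, independently of which pair the dynamics selects, and this maximum equals $L_j(v)+\frac13$ by the computation above. The per-pole identity follows, and summing as in the first paragraph completes the proof.
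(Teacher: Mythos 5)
Your proposal is correct, and its skeleton coincides with the paper's: both arguments reduce the lemma to the per-pole identity \(L_{j}(\Excuse_{j}(v))=L_{j}(v)+\frac{1}{3}\) and then sum over \(j\) and over the three poles, using \(\sum_{j}L_{j}(v_{j})=\Span(\bbV)\). Where you differ is in how that identity is established. The paper encloses \(\Excuse(v)\) in a translated triangular prism \(T=(0,0,0,-1)+\{u: L_{1}(u)\le 0,\ L_{2}(u)\le 0,\ L_{3}(u)\le 1\}\) with \(\Size(T)=1\), and argues that since the excuse pair already has span \(1=\Size(v+T)\), each \(L_{j}\) must attain its maximum over \(v+T\) on \(\Excuse_{j}(v)\); the shifts \(\max_{u\in T}L_{j}(u)\) then contribute the \(+1\). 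You instead evaluate the functionals directly on the three candidate points and observe that each \(L_{j}\) is maximized, with value \(L_{j}(v)+\frac{1}{3}\), at exactly two of the three, so that any two-element subset still realizes all three maxima. The computations are equivalent in content (your tabulated increments are exactly \(\max_{u\in T}L_{j}(u)\)), but your enumeration is more self-contained on one point: the assertion that the excuse pair has span \(1\) is made in the definition of \(\Excuse\) without proof, and it is precisely your two-out-of-three observation that justifies it. The paper's prism formulation, on the other hand, is the one that scales to Toom's general eroder criterion, where an explicit list of maximizers is not available. One small caveat you handle correctly: the lemma's hypothesis states only \(v_{i}\notin\Noise'\), and one also needs \(\xi'(v_{i})=1\) for the excuses to be nonempty; this holds in the lemma's only application and is implicitly assumed by the paper as well.
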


 \begin{proof}
Let \( T \) be the triangular prism 
 \begin{align*}
 (0,0,0,-1)+\setof{u : L_{1}(u)\le 0,\, L_{2}(u)\le 0,\, L_{3}(u)\le 1}.
 \end{align*}
We have \( \Size(T)=1 \), and \( \Excuse(v) \sbs v+T \).
Since the chosen poles turn \( \Excuse(v) \) into a spanned set of size 1,
the function \( L_{j} \) achieves its maximum in \( v+T \) on \( \Excuse_{j}(v) \).
We have
  \[
  L_{j}(\Excuse_{j}(v))=\max_{u\in v+T} L_{j}(u) = L_{j}(v) + \max_{u\in T} L_{j}(u).
  \]
  Hence we have
 \begin{align*}
  \sum_{j}L_{j}(\Excuse_{j}(v_{j})) &=
  \sum_{j}\max_{u\in T}L_{j}(u)+\sum_{j} L_{j}(v_{j})
 \\                         &= \Size(T)+\Span(\bbV)=1+\Span(\bbV).
 \end{align*}
\end{proof}

 \begin{definition}[Clusters]
Let us call two nodes \( u,v \) of the above graph with \( \Time(u)=\Time(v)=t \)
\df{equivalent} if there is a path between them in \( \bG' \) made of arrows,
using only points \( x \) with \( \Time(x)\le t \).
An equivalence class will be called a \df{cluster}.
For a cluster \( K \) we will denote by \( \Time(K) \) the time of its points.
We will say that a fork or arrow \df{connects} two clusters if it connects
some of their nodes.
 \end{definition}

By our definition of \( \bG' \), if a cluster contains a point in \( \Noise' \) then it contains no other points.

 \begin{definition}[Cause graph]
Within a subgraph \( \bG' \) that is some excuse graph,
for a cluster \( K \), we define the \df{cause graph} 
\( G_{K}=\tup{V_{K},E_{K}} \) as follows.
The elements of \( G_{K} \) are those clusters \( R \) with
\( \Time(R)=\Time(K)-1 \) which are reachable by an arrow from \( K \).
For \( R,S\in V_{K} \) we have \( \{R,S\}\in E_{K} \) iff for some
\( v\in R \) and \( w\in S \) we have \( \Time(v)=\Time(w)=\Time(K)-1 \) and
\( \{v,w\}\in\Forks \).
 \end{definition}

 \begin{lemma} \label{l.connSlice} 
The cause graph \( G_{K} \) is connected.
 \end{lemma}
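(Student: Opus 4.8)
The plan is to reduce the connectivity of $G_K$ to two facts: that the (at most two) clusters meeting the excuse of a single node of $K$ are already adjacent in $G_K$, and that the arrow-connectivity defining $K$ forces the excuses of different nodes of $K$ to overlap cluster-by-cluster as one walks along an arrow-path inside $K$. Throughout I work in the excuse graph, so that the only downward arrows leaving a node are the two arrows to its excuse, and these two excuse points lie in a common plane (third coordinate fixed to one value $u'$). I may assume $K$ is not a singleton formed by a point of $\Noise'$: such a point has no downward arrows, so $G_K$ has no vertices and is vacuously connected.

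First I would record the geometric step. For each node $x\in K$ (at time $t=\Time(K)$), its two excuse points are two of the three vertices of a single Toom triangle $\{(a,b,u',t-1),(a+1,b,u',t-1),(a,b+1,u',t-1)\}$, and any two vertices of such a triangle are joined by one of the three forks of that triangle. Since both excuse points are reached from $x$ by arrows, the clusters containing them are vertices of $G_K$, and the fork between the two points is an edge of $G_K$ joining them; call the resulting one- or two-vertex connected piece the \emph{excuse-pair} of $x$. Moreover every vertex $R$ of $G_K$ lies in some excuse-pair: by definition $R$ is reached from some $x\in K$ by an arrow, i.e.\ $R$ contains an excuse point of $x$. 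Hence it suffices to show that the excuse-pairs of any two nodes of $K$ lie in one component of $G_K$.

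The heart of the argument is the following path bookkeeping. Take $x,y\in K$ and an arrow-path between them inside $\bG'$ using only points of time $\le t$, and list its successive visits to time $t$ (all of which lie in $K$) as $x=q_0,q_1,\dots,q_m=y$. Between consecutive visits $q_l$ and $q_{l+1}$ the path must step down to time $t-1$ (it cannot exceed $t$) and stay at time $\le t-1$ until it climbs back; let $z_l$ be the first point it reaches at time $t-1$ and $z_l'$ the last. Then $z_l$ is an excuse point of $q_l$ and $z_l'$ an excuse point of $q_{l+1}$, while the intermediate stretch connects $z_l$ to $z_l'$ by arrows through time $\le t-1$, so $z_l$ and $z_l'$ lie in one cluster at time $t-1$. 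That common cluster belongs to the excuse-pair of $q_l$ and to that of $q_{l+1}$, so these two excuse-pairs share a vertex; chaining over $l$ connects the excuse-pairs of $x$ and $y$, and combined with the first step this proves $G_K$ connected.

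The main thing to get right is precisely this bookkeeping: one must verify that between two consecutive time-$t$ visits the path genuinely leaves time $t$ at both ends, so that $z_l,z_l'$ exist and have time $t-1$, and that the arrows deleted below $\Noise'$ nodes never interfere — the first down-step out of $q_l$ and the last up-step into $q_{l+1}$ are genuine excuse arrows, which forces $q_l,q_{l+1}\notin\Noise'$ automatically. The identification ``$z_l$ and $z_l'$ are arrow-connected through time $\le t-1$, hence in one cluster at time $t-1$'' is just the cluster definition applied one level down, and it is the only place the cluster definition enters; everything else rests on the fork-adjacency of the two excuse points of each node.
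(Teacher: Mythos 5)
Your proof is correct and follows the same route as the paper's much terser argument: decompose the arrow-path witnessing the connectivity of \(K\) into excursions below time \(\Time(K)\), observe that each excursion stays inside a single time-\((\Time(K)-1)\) cluster, and that the two excuse arrows leaving a common point of \(K\) have tails joined by a fork. You have simply spelled out the bookkeeping the paper compresses into three sentences, including making explicit the (correct) reading that the only downward arrows in play are the excuse arrows, whose two tails share the same third coordinate and hence really are fork-adjacent.
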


 \begin{proof}
The points of \( K \) are connected via arrows using
points \( x \) with \( \Time(x) \le \Time(K) \).
The clusters in \( G_{K} \) are therefore connected with each other only through
pairs of arrows going trough \( K \).
The tails of each such pair of arrows in \( \Time(K)-1 \) are connected by a fork.%
\end{proof}

 \begin{definition}
A \df{spanned cluster} is a spanned set that is a cluster.
 \end{definition}

The explanation tree will be built from an intermediate object defined
below.
Let us fix a point \( u_{0} \): from now on we will work in the subgraph of the
graph \( \bG' \) reachable from \( u_{0} \) by arrows pointing backward in time.
Clusters are defined in this graph.

 \begin {figure}[ht]
\begin {equation*}
\includegraphics{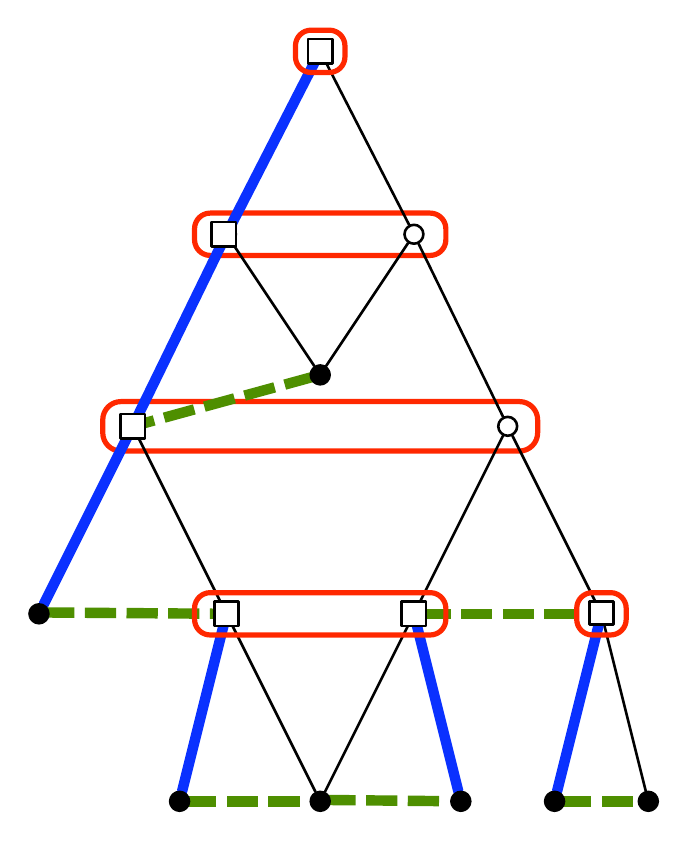}
\end {equation*}
\caption {An explanation tree.  
The black points are noise.
The squares are other points of the explanation tree.
Thin lines are arrows not in the explanation tree.
Framed sets are clusters to which
the refinement operation was applied.
Thick solid lines are arrows, thick broken lines are forks of the
explanation tree.
\label{f.expl-tree}}
 \end {figure}

 \begin{definition}
A \df{partial explanation tree} is an object of the form
\( \tup{C_{0},C_{1},E} \).
Elements of \( C_{0} \) are spanned clusters called \df{unprocessed nodes}, 
elements of \( C_{1} \) are \df{processed nodes}, these are nodes of \( \bG \).
The set \( E \) is a set of arrows or forks
between processed nodes, between poles of the spanned clusters, and between
processed nodes and poles of the spanned clusters.
From this structure a graph is formed if we identify each pole of a spanned
cluster with the cluster itself.
This graph is required to be a tree.

The \df{span} of such a tree will be the sum \( \Span(T) \) of the spans
of its unprocessed clusters and the number of its forks.
 \end{definition}

The explanation tree will be built by applying repeatedly a
``refinement'' operation to partial explanation trees.

 \begin{definition}[Refinement]
Let \( T \) be a partial explanation tree, and let the spanned cluster
\( \bbK=\tup{K,v_{1},v_{2},v_{3}} \) be one of its unprocessed nodes, with
\( v_{i} \) not in \( \Noise' \).
We apply an operation whose result will be a new tree \( T' \).

Consider the cause graph \( G_{K}=\tup{V_{K},E_{K}} \) defined above.
Let \( \cM=V_{K}\cup E_{K} \), that is the family of all clusters in \( V_{K} \) 
(sets of points) and all edges in \( G_{K} \) connecting them, (two-element
sets).
Let \( L \) be the union of these sets, and \( \bbL=\tup{L, u_{1},u_{2},u_{3}} \)
a spanned set where \( u_{i}=\Excuse_{i}(v_{i}) \).
Lemma~\ref{l.connSlice} implies that the set \( \cM \) is connected by
intersection.
Applying the Spanning Lemma~\ref{l.stokes} to \( \bbL \) and \( \cM \),
we find a family \( \bbM_{1},\dots,\bbM_{n} \) of spanned sets with
 \[ 
  \sum_{i}\Span(\bbM_{i}) = \Span(\bbL)=\sum_{i}L_{i}(u_{i}).
 \]
It follows from Lemma~\ref{l.spanIncr} that the latter sum is
\( \Span(\bbK)+1 \), and that \( u_{i} \) are among the poles of these sets.
Some of these sets are spanned clusters, others are forks
connecting them, adjacent to their poles.
Consider these forks again as edges and the spanned clusters as
nodes.
By the minimality property of Lemma~\ref{l.stokes}, they form a tree
\( U(\bbK) \) that connect the three poles of \( \bbL \).

The refinement operation takes an unprocessed node
\( \bbK=\tup{K,v_{1},v_{2},v_{3}} \) in the tree \( T \).
This node is connected to other parts of the tree by some of its poles
\( v_{j} \).

The  operation deletes cluster \( K \), and keeps those poles \( v_{j} \)
that were needed to keep connected \( \bbK \) to other clusters and nodes in \( T \).
It turns these into processed nodes, and adds the tree \( U(\bbK) \) just built,
declaring each of its spanned clusters unprocessed nodes.
Then it adds the arrow from these \( v_{j} \) to \( \Excuse_{j}(v_{j}) \).
Even if none of these nodes were needed for connection, it keeps \( v_{1} \)
and adds the arrow from \( v_{1} \) to \( \Excuse_{1}(v_{1}) \).
 \end{definition}

The refinement operation increases both the span and the
number of arrows by 1.

Let us build now the explanation tree.
We start with a node \( u_{0}\not\in\Noise' \) with \( \xi'(u_{0})=1 \) and from
now on 
work in the subgraph of the graph \( \bG \) of points reachable from \( u_{0} \) by
arrows backward in time.
Then \( \tup{\{u_{0}\},u_{0},u_{0},u_{0}} \) is a spanned cluster, forming a
one-node partial explanation tree if we declare it an unprocessed node.
We apply the refinement operation to this partial explanation
tree, as long as we can. 
When it cannot be applied any longer then all nodes are either processed
or one-point spanned clusters belonging to \( \Noise' \).
See an example in Figure~\ref{f.expl-tree}.

\begin{proof}[Proof of Lemma \protect\ref{l.explTree}]
What is left to prove is the estimate on the number of edges of our explanation tree \( T \).
Note the following:
\begin{itemize}
\item The span of \( T \) is the number of its forks.
\item Each point at some time \( t \) that is not in \( \Noise' \) is incident to some arrows going to time \( t-1 \).
\end{itemize}
Let us contract each arrow \( \tup{u,v} \) of \( T \) one-by-one into its bottom point \( v \).
The edges of the resulting tree \( T' \) are the forks.
All the processed nodes will be contracted into the remaining
one-node clusters that are elements of \( \Noise' \).
If \( n \) is the number of these nodes then there are \( n-1=\Span(T) \)
forks in \( T' \).

The number of arrows in \( T \) is at most \( 3(n-1) \).
Indeed, each introduction of at most 3 arrows by the refinement operation
was accompanied by an increase of the span by 1.
The total number of edges of \( T \) is thus at most \( 4(n-1) \).
\end{proof}

 \bibliographystyle{amsplain}
 \bibliography{reli,gacs-publ}






\end{document}